\newtheorem{lemma}{Lemma}[section]
\newtheorem{theorem}[lemma]{Theorem}
\newtheorem{corollary}[lemma]{Corollary}
\begin{document}

\begin{frontmatter}

% Title, authors and addresses

% use the thanksref command within \title, \author or \address for footnotes;
% use the corauthref command within \author for corresponding author footnotes;
% use the ead command for the email address,
% and the form \ead[url] for the home page:
% \title{Title\thanksref{label1}}
% \thanks[label1]{}
% \author{Name\corauthref{cor1}\thanksref{label2}}
% \ead{email address}
% \ead[url]{home page}
% \thanks[label2]{}
% \corauth[cor1]{}
% \address{Address\thanksref{label3}}
% \thanks[label3]{}

%\title{}
\title{Characterizing Graphs of Zonohedra}
\author{Muhammad Abdullah Adnan\corauthref{cor1}}, \ead{adnan@cse.buet.ac.bd}
\author{Masud Hasan}

\ead{masudhasan@cse.buet.ac.bd}

\corauth[cor1]{Corresponding author.}

\address{Department of Computer Science and Engineering,\\
Bangladesh University of Engineering and Technology (BUET), \\
Dhaka-1000, Bangladesh}

% use optional labels to link authors explicitly to addresses:
% \author[label1,label2]{}
% \address[label1]{}
% \address[label2]{}

\begin{abstract}
A classic theorem by Steinitz states that a graph $G$ is
realizable by a convex polyhedron if and only if $G$ is
3-connected planar. Zonohedra are an important subclass of convex
polyhedra having the property that the faces of a zonohedron are
parallelograms and are in parallel pairs. In this paper we give
characterization of graphs of zonohedra. We also give a linear
time algorithm to recognize such a graph. In our quest for finding
the algorithm, we prove that in a zonohedron $P$ both the number
of zones and the number of faces in each zone is $O(\sqrt{n})$,
where $n$ is the number of vertices of $P$.
\end{abstract}

\begin{keyword}
Convex polyhedra \sep zonohedra \sep Steinitz's theorem \sep
planar graph \MSC 52B05 \sep 52B10 \sep 52B15 \sep 52B99
\end{keyword}
\end{frontmatter}

\bibliographystyle{abbrv}

\section{Introduction}
Polyhedra are fundamental geometric structures in 3D.
%(e.g. see~\cite{BDEK99,C97,g67,HL03}).
Polyhedra have fascinated mankind since prehistory.
They were first studied formally by the ancient Greeks
and continue to fascinate mathematicians, geometers,
architects, and computer scientists~\cite{HRZ97}.
Polyhedra are also studied in the field of art, ornament,
nature, cartography, and even in philosophy and literature~\cite{C97}.
%They are the product of convex hull algorithms and are key
%components for problems in robot  motion planning and computer
%aided geometric design.

Usually polyhedra are categorized based on certain mathematical
and geometric properties. For example, in platonic solids, which
are the most primitive convex polyhedra, vertices are incident to
the same number of identical regular polygons, in Archimedean
solids the vertices are allowed to have more than one type of
regular polygons but the sequence of the polygons around each
vertex are the same, etc. Among other types of polyhedra, Johnson
solids, prisms and antiprisms, zonohedra, Kepler-pointsod
polyhedra, symmetrohedra are few to mention. (See~\cite{KH01}, the
books~\cite{C73,C97}, and the webpages~\cite{H,MW} for more on
different classes of polyhedra.)

Polyhedra, in particular convex, provide a strong link between
computational geometry and graph theory, and a major credit for
establishing this link goes to Steinitz. In 1922, in a remarkable
theorem Steinitz stated that a graph is the naturally induced
graph of a convex polyhedron if and only if it is 3-connected
planar~\cite{g67,SR34}. Till today, Steinitz's theorem attracts
the scientists and mathematicians to work on it. For example,
there exist several proofs of Steinitz's
theorem~\cite{DG97,Zie95}.

One of the simplest subclasses of convex polyhedra are {\it
generalized zonohedra}, where every face has a parallel face and
the edges in each face are in parallel pairs~\cite{T92} (see
Figure~\ref{fig:fig11}(a)). This definition of zonohedra is
equivalent to the definition given originally by the Russian
crystallographer Fedorov~\cite{C73,T92}. Later Coxeter~\cite{C73}
considered two other definitions of zonohedra to mean more special
cases: (i) all faces are parallelograms and (ii) all faces are
rhombi. Coxeter called these two types as {\it zonohedra} and {\it
equilateral zonohedra} respectively. The polyhedra of
Figures~\ref{fig:fig11}(b) and ~\ref{fig:fig11}(c) are two
examples of these two types respectively. (For history and more
information on zonohedra, see~\cite{T92} and the web
pages~\cite{E,E96,H}.) In this paper, by {\it zonohedra} we mean
zonohedra defined by Coxeter.

\begin{figure}[!htbp]
\begin{center}
\includegraphics[width=\textwidth]{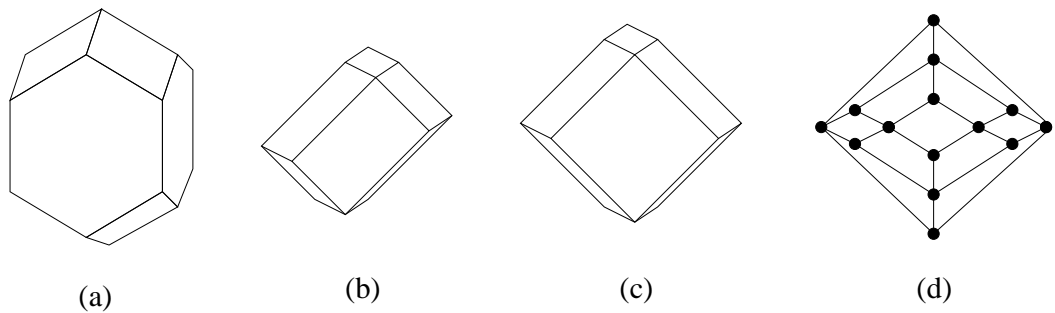}
\caption{(a) A generalized zonohedron, (b) a zonohedron, (c) an
equilateral zonohedron, and (d) the graph of the zonohedra of (b)
and (c).} \label{fig:fig11}
\end{center}
\end{figure}

As mentioned earlier, so far convex polyhedra have been classified
in many different classes based on geometric properties.
But to our knowledge they have not been classified based on their graphs.
%have not been classified based on the subclasses of convex polyhedra.
Motivated by Steinitz's theorem, in this paper we characterize the
graphs of zonohedra (Section 3, 4). Graphs of zonohedra are also
called the {\it zonohedral graphs}. See
Figure~\ref{fig:fig11}(d). We also give a linear time algorithm
for recognizing a zonohedral graph (Section 5). As accompanying results,
we show that in a zonohedron $P$ both the number of ``zones'' and the number
of faces in each zone is $O(\sqrt{n})$,
where $n$ is the number of vertices of $P$
(Section $4$).

\section{Preliminaries}

A {\it graph} $G=(V,E)$ consists of a set of vertices $V(G)$ and a
set of edges $E(G)$ where an edge $(u,v)\in E(G)$ connects two
vertices $u,v\in V(G)$. A {\it path} in a graph $G$ is an alternating 
sequence of distinct vertices and edges where each
edge is incident to two vertices immediately preceding and
following it. A {\it cycle} is a closed path in $G$. The \emph{length} of 
a path (cycle) is the number of its edges.

A graph $G$ is {\it connected} if for two distinct vertices $u$
and $v$ there is a path between $u$ and $v$ in $G$. A (connected)
{\it component} of $G$ is a maximal connected subgraph of $G$. The
{\it connectivity} $\kappa (G)$ of $G$ is the minimum number of
vertices whose removal makes $G$ disconnected or a single vertex.
We say that $G$ is {\it $k$-connected} if $\kappa (G)\geq k$.
Alternatively, $G$ is $k$-connected if for any two vertices of $G$
there are at least $k$ disjoint paths.

A graph is {\it planar} if it can be embedded in the plane so that
no two edges intersect geometrically except at their common
extremity. A {\it plane graph} is a planar graph with a fixed
embedding in the plane. A planar graph divides the plane into
connected regions called {\it faces}.

%By {\it opposite edges} of an even face we mean the two edges that
%are equidistant (in terms of number of edges between them) from
%each other.

A \emph{convex polyhedron} $P$ is the bounded intersection of a
finite number of half-spaces. The bounded planar surfaces of $P$
are called \emph{faces}. The faces meet along line segments,
called  \emph{edges}, and the edges meet at endpoints, called
\emph{vertices}.
% \cite{Z95}.
%A \emph{face}/\emph{edge}/ \emph{vertex} of $P$ is the maximal
%connected set of points which belong to exactly one/exactly two/at
%least three planes that support these half spaces.
By Euler's
theorem every convex polyhedron with $n$ vertices has
$\Theta(n)$ edges and $\Theta(n)$ faces~\cite{C97}. % \cite{TMA04}.
We call the number of edges belonging to a face
(of a polyhedron or a plane graph) as its {\it face length}.
%A face with even face length is called an {\it even face}.
%Let $f$ be an even face.
%Let the sequence of edges of $f$ be $e_1,e_2,\ldots,e_{2k}$, $k\geq 2$.
%Then for $i = 1,2,\ldots,k$, the pair of edges $e_i$ and $e_{i+k}$
%are called {\it opposite} to each other in $f$.
%When $f$ belongs to a polyhedron, two opposite edges are also parallel.

Let $s$ be the unit sphere centered at the origin.
On $s$ the \emph{normal point} of a face $f$ of $P$ is the
intersection point of $s$ and the outward normal vector of $f$
drawn from the origin. In the {\it Gauss map} of $P$ on $s$ %is the mapping of $P$ on $s$ such that
each face is represented by its normal point, each edge $e$ is
represented by the geodesic arc between the normal points of the
two faces adjacent to $e$, and each vertex $v$ is represented by
the spherical convex polygon formed by the arcs of the edges
incident to $v$.
%connected formed by the to of the polygon separating a pair
%of faces to a great circle arc connecting the corresponding two points.

In the {\it graph of a convex polyhedron} $P$, there is exactly
one vertex for each vertex of $P$ and two vertices are connected
by an edge if and only if the corresponding vertices in $P$ form
an edge.

Quite frequently we will use the same symbols for terms such as
edge, face, zone, etc. that are defined for both a polyhedron and
its graph.

\section{Characterization}

Faces of a zonohedron $P$ are grouped into different ``cycles'' of faces
called the \emph{zones} of $P$.
All faces of a zone $z$, which are called the \emph{zone faces}
of $z$, are parallel to a direction called the \emph{zone axis} of $z$.
For example, a cube has three zones with three zone axes perpendicular to each other.
Every zone face $f$ of $z$ has exactly two edges that are parallel to
the zone axis of $z$,
and the set of all such edges of the zone faces of $z$
are called the \emph{zone edges} of $z$.
%Clearly two zone edges of a zone face are opposite to each other.

%How is a zone $z$ of $P$ represented in the Gauss map of $P$? It is represented
In the Gauss map of $P$ a zone $z$ of $P$ is represented
as a great circle $g$ of $s$. Why? The normals of the
zone faces of $z$ are perpendicular to the zone axis of $z$. So
the corresponding normal points lie on a great circle whose plane
is perpendicular to the zone axis, and $g$ is exactly that great
circle. Observe that $g$ is the concatenation of small geodesic
arcs corresponding to the zone edges of $z$.

\begin{lemma}
\label{lem353} A zone $z$ of $P$ has even number of faces.
\end{lemma}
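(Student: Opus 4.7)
The plan is to argue via the Gauss map, pairing each zone face with its parallel counterpart. Write $z$ for the zone and $g$ for the great circle on $s$ representing $z$ in the Gauss map of $P$. By the discussion preceding the lemma, the normal points of the zone faces of $z$ all lie on $g$, and the number of such normal points equals the number of zone faces.

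The first step is to show that the zone faces come in parallel pairs. Let $f$ be any zone face of $z$, and let $a$ be the zone axis. Since $P$ is a zonohedron, every face is a parallelogram and has a parallel partner elsewhere on $P$; let $f'$ be the face parallel to $f$. Being zonohedral parallelograms with the same pair of edge directions, $f$ and $f'$ share the same edge directions; in particular, since $f$ has two edges parallel to $a$, so does $f'$. Hence $f'$ is itself a zone face of $z$. Thus the map $f \mapsto f'$ is an involution without fixed points on the set of zone faces of $z$.

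The second step is to translate this pairing to $g$. The outward normal of $f'$ is the negative of the outward normal of $f$, so the normal points of $f$ and $f'$ are antipodal on $s$; both lie on $g$ and are distinct. Therefore the normal points on $g$ partition into antipodal pairs, and their number — equal to the number of faces in $z$ — is even.

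The main obstacle is the first step, namely verifying that the parallel partner $f'$ of a zone face $f$ necessarily lies in the \emph{same} zone. This needs the fact that two parallel parallelogram faces of a zonohedron share not only their plane direction but also their pair of edge directions, so that an edge of $f$ parallel to the zone axis forces a corresponding parallel edge of $f'$. Once this is in place, the antipodal pairing on $g$ is immediate and the parity conclusion follows.
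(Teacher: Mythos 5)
Your overall strategy is the same as the paper's: pair each zone face with its parallel partner, observe that their normal points are antipodal on the great circle $g$ representing $z$, and conclude the count is even. The one place you diverge is your Step 1, where you establish $f'\in z$ by claiming that two parallel parallelogram faces of a zonohedron share the same pair of edge directions, so that $f'$ inherits the two edges parallel to the zone axis. You correctly flag this as the main obstacle, and it is indeed a gap as written: "parallel faces" only means parallel supporting planes, and two parallelograms in parallel planes need not have parallel edges, so the claim needs a separate argument (it is true for zonohedra, but proving it essentially requires the central symmetry of $P$ or the zone structure itself, which risks circularity).

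The fix is already sitting in your Step 2, and it is exactly what the paper does: you do not need the edge-direction fact at all. The zone $z$ is represented by the great circle $g$ (the normals of the zone faces are precisely those perpendicular to the zone axis), and a great circle contains a point if and only if it contains its antipode. Since the normal points of $f$ and $f'$ are antipodal, $f\in z$ immediately implies $f'\in z$, and the fixed-point-free involution $f\mapsto f'$ gives the even count. So your proof is salvageable by deleting the edge-direction argument and letting the antipodal observation carry Step 1 as well.
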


%\begin{proof}
%In $P$ any face $f$ has a parallel face $f'$. In $z$ all zone
%faces are parallel to the zone axis of $z$. Hence if $z$ contains
%$f$, it must also contain $f'$. Thus the number of faces in $z$ is
%even. \qed
%\end{proof}

\begin{proof}
In $P$ any face $f$ has a parallel face $f'$.
Now consider the Gauss map of $P$ where $g$ is the
great circle corresponding to the zone $z$.
Since on $s$ the normal points of $f$ and $f'$ are antipodal,
if $g$ contains the nomral point of $f$ then it also contains
the normal point of $f'$.
%the normal point of $f$ it must also contains the normal point of $f'$.
%If $z$ contains A zone face $f$ of $z$ correspond to a point $p$ on $g$.
%Since $f'$ is parallel to $f$,
%Then $f'$ correspond to the antipodal point of $p$ on $g$.
Equivalently, if $f$ is in $z$, then $f'$ too is in $z$, and
% has its parallel face on $z$ and
%the faces correspond to two antipodal points on $g$.
thus the number of faces in $z$ is even. \qed
\end{proof}

\begin{lemma}
\label{lem355} Any two zones $z_1$ and $z_2$ of $P$ intersect into
two parallel faces.
\end{lemma}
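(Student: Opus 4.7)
The plan is to lift the question to the Gauss map of $P$ on the unit sphere $s$. By the observation preceding Lemma~\ref{lem353}, each zone $z_i$ is represented by a great circle $g_i$, and a face $f$ lies in $z_i$ exactly when its normal point lies on $g_i$. Since $z_1 \neq z_2$, their zone axes are different, so the great circles $g_1$ and $g_2$ (whose planes are perpendicular to those axes) are distinct.

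The first step is to invoke the elementary spherical fact that two distinct great circles on $s$ intersect in exactly two antipodal points, call them $p$ and $-p$. The second, and most delicate, step is to argue that $p$ and $-p$ are actually normal points of faces of $P$ --- that is, vertices of the Gauss map, not interior points of its arcs. For this I would use that the Gauss map is a cellular decomposition of $s$ dual to $P$: its arcs meet only at normal points. Since $g_1$ is the concatenation of the Gauss-map arcs corresponding to the zone edges of $z_1$, and $g_2$ similarly for $z_2$, any point lying on both great circles lies on an arc of each; if such a point were interior to an arc of $g_1$ and also to an arc of $g_2$, these two arcs would cross transversally in their interiors, contradicting planarity of the Gauss map. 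Hence $p$ (and by symmetry $-p$) is a normal point.

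The last step is then immediate: let $f$ and $f'$ be the faces of $P$ whose normal points are $p$ and $-p$. Because $p \in g_1 \cap g_2$, the face $f$ lies in both $z_1$ and $z_2$; similarly for $f'$. Because $p$ and $-p$ are antipodal on $s$, the outward normals of $f$ and $f'$ are opposite, so $f$ and $f'$ are parallel. Conversely, any face shared by $z_1$ and $z_2$ has its normal point on $g_1 \cap g_2 = \{p,-p\}$, so the intersection consists of exactly these two parallel faces.

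I expect the main obstacle to be the justification of Step~2, i.e.\ that the crossings of $g_1$ and $g_2$ occur at vertices of the Gauss map rather than in the interiors of arcs; everything else is either definitional or a one-line spherical geometry fact.
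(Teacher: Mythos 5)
Your proof takes essentially the same route as the paper's: pass to the Gauss map, observe that the two zones become distinct great circles meeting in two antipodal points, and identify those points with a pair of parallel faces. You are in fact more careful than the paper, which simply asserts that the antipodal intersection points ``correspond to a pair of parallel faces'' and omits your planarity argument that these points must be vertices of the Gauss map (normal points of faces) rather than interior points of arcs.
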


\begin{proof}
Consider the Gauss map of $P$. Let $g_1$ and $g_2$ be the great
circles corresponding to the zones $z_1$ and $z_2$ respectively.
$g_1$ and $g_2$ intersect into two antipodal points $p_1$ and
$p_2$. Since a pair of antipodal points corresponds to a pair of
parallel faces, $z_1$ and $z_2$ intersect into nothing but the
pair of parallel faces corresponding to $p_1$ and $p_2$. \qed
\end{proof}

\begin{lemma}
\label{lem1} A face of $P$ belongs to exactly two zones of $P$.
\end{lemma}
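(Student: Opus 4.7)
The plan is to use the Gauss map together with the defining property that every face of $P$ is a parallelogram.

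First I would observe that a face $f$ of $P$, being a parallelogram, has exactly four edges arranged in two pairs of parallel edges, and therefore exactly two distinct edge directions, call them $d_1$ and $d_2$. By the definition of a zone, $f$ is a zone face of a zone $z$ iff two edges of $f$ are parallel to the axis of $z$; since parallel edges of $f$ necessarily share a common direction, $f$ can be in the zone of axis $d_1$ and in the zone of axis $d_2$, and any zone $z$ containing $f$ must have axis equal to $d_1$ or $d_2$. This already bounds the number of zones containing $f$ by two.

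Next I would verify, via the Gauss map, that $f$ actually lies in two distinct zones — one for each direction $d_i$. Let $n_f$ be the normal point of $f$. In the Gauss map the four edges of $f$ correspond to four geodesic arcs emanating from $n_f$. For a pair of parallel edges of $f$, each of the two corresponding arcs lies on a great circle whose plane passes through the origin and is perpendicular to that common edge direction; these two planes coincide (same normal direction, both passing through $n_f$), so the two arcs lie on a single great circle through $n_f$. This great circle is exactly the great circle associated with the zone of axis $d_i$, as described in the excerpt. Since $d_1 \neq d_2$, the two great circles at $n_f$ are distinct, giving two distinct zones containing $f$.

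Finally I would combine the two bounds: $f$ belongs to at most two zones (from the parallelogram argument) and to at least two zones (from the Gauss map argument), so $f$ belongs to exactly two zones. The main subtlety I expect is making step two precise — justifying that the two arcs at $n_f$ corresponding to parallel edges of $f$ lie on the same great circle, and that this great circle is a \emph{full} zone circle and not merely a local pair of arcs. The former follows from the characterization of geodesic arcs in the Gauss map as being perpendicular to the corresponding edge direction, and the latter is immediate once we know the zone circles described in the preliminary discussion of the Gauss map are precisely the great circles swept out by concatenating arcs of parallel edges.
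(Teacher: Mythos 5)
Your proof is correct and takes essentially the same approach as the paper, whose entire argument is the one-line observation that a face has two pairs of parallel edges which are zone edges of two different zones. Your version simply fleshes out that observation with the upper-bound count on edge directions and the Gauss-map verification that the two zones are distinct and genuinely contain $f$.
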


\begin{proof}
A face of $P$ has two pairs of parallel edges which are zone edges
of two different zones.
%A zone doesn't self-intersect since it's corresponding great
%circle does not do so.
\qed
\end{proof}

%Consider a zone face $f$ of $z$.

\begin{lemma}
\label{lem354} Let $f$ and $f'$ be two parallel faces of a zone
$z$. Then $f$ and $f'$ divides $z$ into two non-empty equal chains
of faces (excluding $f$ and $f'$) where each face in one chain has
its parallel pair in the other chain.
\end{lemma}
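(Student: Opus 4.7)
The plan is to work entirely on the Gauss map, where the zone $z$ is represented by a great circle $g$ on the unit sphere $s$, and each zone face corresponds to a normal point on $g$.

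First, I would translate the hypotheses into the language of $g$. Since $f$ and $f'$ are parallel faces of $P$, their outward normals are antipodal, so their normal points $p$ and $p'$ lie on $g$ as a pair of antipodal points. The two points $p$ and $p'$ divide $g$ into two open semicircular arcs $A_1$ and $A_2$, each of length $\pi$. The normal points of the remaining zone faces of $z$ all lie on $g$, and hence are partitioned by $p,p'$ between $A_1$ and $A_2$. Reading off the cyclic order of normal points along $g$ recovers the cyclic order of faces around the zone, so the normal points on $A_1$ (respectively $A_2$) correspond exactly to one of the two chains of faces obtained by cutting the zone cycle at $f$ and $f'$.

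Next I would establish the pairing and equal-length statement. For any face $h$ of $z$ whose normal point $q$ lies on $A_1$, the parallel face $h'$ of $h$ has normal point $-q$, which is antipodal to $q$ and hence lies on $A_2$ (antipodality sends the open semicircle $A_1$ bijectively onto the open semicircle $A_2$). This antipodal map therefore provides a bijection between the faces in the two chains such that paired faces are parallel in $P$; in particular the two chains have the same number of faces.

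The remaining, and I expect the most delicate, point is non-emptiness of each chain. If one of the arcs, say $A_2$, contained no intermediate normal points, then in the cyclic order along $g$ the points $p$ and $p'$ would be consecutive, which means $f$ and $f'$ would be adjacent in the zone cycle and share a zone edge. But in a convex polyhedron two faces that share an edge meet along that edge at a nonzero dihedral angle, so they cannot be parallel, contradicting the hypothesis on $f$ and $f'$. Hence both $A_1$ and $A_2$ carry at least one intermediate normal point, and both chains are non-empty, completing the proof. \qed
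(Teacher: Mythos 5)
Your proof is correct and follows essentially the same route as the paper's: pass to the Gauss map, note that the normal points of $f$ and $f'$ are antipodal on the great circle $g$ of $z$, and use the antipodal map to pair each intermediate normal point on one semicircular arc with one on the other, giving the parallel-face bijection and hence equal chain lengths. Your explicit argument for non-emptiness (adjacent faces of a convex polyhedron meet at a nonzero dihedral angle and so cannot be parallel) is a point the paper's proof leaves implicit, and is a worthwhile addition.
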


\begin{proof}
Consider the Gauss map of $P$. Let $g$ be the great circle
corresponding to the zone $z$. Since $f$ and $f'$ are parallel,
their normal points are two antipodal points $p$ and $p'$,
respectively, in $g$. $p$ and $p'$ divides $g$ into two half
circles $h_1$ and $h_2$. If $h_1$ contains a normal point
$p_1\notin \{p,p'\}$ of a face $f_1$ of $z$, then the antipodal
point of $p_1$, which is the normal point of the parallel face of
$f_1$, is in $h_2$.
%not . Therefore antipodal points
%of every face of $h_1$ must be in $h_2$.
\qed
\end{proof}

Let us now reflect the above (geometric) properties of $P$ to its
graph. Let $G$ be a 3-connected planar graph with even faces.
According to Tutte~\cite{T63,Wes01}, every 3-connected planar
graph has a unique planar embedding (and from now on by $G$ we
mean it with its unique planar embedding.) We override the
definition of a zone for $G$.
%a graph.
The pair of alternating edges of a (quadrilateral) face $f$ (of $P$ or $G$)
are called \emph{opposite} to each other in $f$.
A {\it zone} $z$ in $G$ is a cycle of \emph{zone faces} where for any three
consecutive zone faces $f_i,f_{i+1},f_{i+2}$ the common edge of $f_i$ and $f_{i+1}$
and the common edge of $f_{i+1}$ and $f_{i+2}$ are opposite in $f_{i+1}$,
and these common edges of the zone faces of $z$ are called the \emph{zone edges} of $z$.

As implied from the above properties of $P$, the necessary
conditions for $G$ to be a graph of a zonohedron are: $G$ is
3-connected planar, its face lengths are four, every face is a
zone face of exactly two zones, and any pair of zones intersect
into two faces and divide each other into two non-empty equal
chains of faces. Our main result consists in providing a
characterization by showing that the above conditions are also
sufficient.

\begin{theorem}
\label{thm311} A graph $G$ is a graph of a zonohedron iff $G$ is
3-connected planar, faces of $G$ are quadrilaterals, every face
 is a zone face of exactly two zones, and
each pair of zones in $G$ intersect into two faces and divide each
other into two non-empty equal chains of faces (excluding the two
common faces).
\end{theorem}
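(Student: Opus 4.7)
The forward direction is immediate: if $G$ is the graph of a zonohedron $P$, then $G$ is $3$-connected planar by Steinitz's theorem, its faces are quadrilaterals since every face of $P$ is a parallelogram, and Lemmas~\ref{lem353}, \ref{lem355}, \ref{lem1}, and \ref{lem354} together deliver exactly the remaining three zone conditions. So the substantive work lies in the converse.

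My plan for the converse is to reconstruct $P$ via its Gauss map. Let $z_1,\dots,z_k$ be the zones of $G$. The hypotheses translate directly into the combinatorial data of an arrangement of great circles on the unit sphere $s$: each zone becomes a great circle, each pair of zones meets in two antipodal points (Lemma~\ref{lem355}), those two points split each circle into two equal chains of arcs (Lemma~\ref{lem354}), and each arc belongs to exactly two circles (Lemma~\ref{lem1}). I would argue by induction on $k$ that this abstract arrangement is realizable by an actual arrangement of $k$ great circles on $s$. For the inductive step, I would delete zone $z_k$ from $G$: remove every zone edge of $z_k$ and merge each adjacent pair of zone faces of $z_k$ (those sharing a deleted edge) into a single new quadrilateral. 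I would verify that the resulting graph $G'$ still satisfies all the theorem's hypotheses, invoke the induction hypothesis to obtain a great-circle arrangement realizing $G'$, and then re-insert a $k$-th great circle whose sequence of crossings with the existing circles matches the cyclic order of faces in $z_k$.

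Once the arrangement is realized on $s$, let $\mathbf{v}_i$ be a unit vector normal to the plane of the $i$-th great circle and set
\[
P \;=\; \sum_{i=1}^{k} [-\mathbf{v}_i,\mathbf{v}_i],
\]
the Minkowski sum of segments along these directions. This is the standard construction of a zonohedron with $k$ zones, and its Gauss map is precisely the realized arrangement, so its graph is $G$.

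The main obstacle, and where I expect to spend most of the proof, is the inductive bookkeeping: checking that $G'$ obtained by deleting $z_k$ still has quadrilateral faces, is $3$-connected planar, and still satisfies the two-common-faces and equal-chains conditions for every pair of remaining zones, together with showing that the new circle can be geometrically inserted to realize the prescribed crossings. Unlike generic pseudoline arrangements, which can fail to be stretchable, the antipodal symmetry forced by the \emph{equal-chains} hypothesis rules out the obstructions that normally block such a realization, so I expect the insertion step to go through once the combinatorial structure is shown to persist under zone deletion.
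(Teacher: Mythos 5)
Your forward direction and your high-level plan (induction on the number of zones, deleting one zone at a time) track the paper, but the heart of your converse --- the claim that the abstract great-circle arrangement extracted from $G$ is actually realizable by genuine great circles --- is precisely where the whole difficulty of the theorem sits, and your proposal offers no argument for it. The justification you give (``the antipodal symmetry forced by the equal-chains hypothesis rules out the obstructions that normally block such a realization'') is unsubstantiated and not a usable principle: every arrangement of pseudolines in the projective plane lifts to a centrally symmetric arrangement of pseudocircles on $s$ in which any two pseudocircles cross in two antipodal points and split each other into equal chains, and simple non-stretchable pseudoline arrangements exist (Ringel's perturbation of the non-Pappus configuration, with nine pseudolines). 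So central symmetry alone does not remove the stretchability obstruction, and your ``insert the $k$-th circle with the prescribed crossing sequence'' step cannot be taken for granted; it is exactly where a genuine geometric argument is required. The paper concentrates all of its effort here: instead of realizing circles abstractly and then forming $\sum_i[-\mathbf{v}_i,\mathbf{v}_i]$, it rebuilds the polyhedron directly from a cube, at each step expanding the zone cycle $c$ into a band of rhombi along a direction $d$, and Lemma~\ref{le:d} argues by induction (with a case analysis, carried out on the Gauss map, of how $c$ meets the previously added zone cycle) that a direction $d$ keeping the result convex exists. Whatever one thinks of the details of that lemma, it is the substance of the theorem, and your proposal defers it.

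There is also a concrete error in your deletion operation. Removing every zone edge of $z_k$ merges the entire cyclic band of zone faces into a single annular region (each consecutive pair shares a deleted edge, so all of them coalesce), not into new quadrilaterals; and even for a single pair, deleting the common edge of two quadrilaterals yields a hexagon, not a quadrilateral. The operation the paper uses is to \emph{contract} the two zone edges of each zone face of $z_k$, so that each zone face collapses to a single edge and the whole band collapses to a cycle (the zone cycle); Lemma~\ref{lem71} then verifies that the contracted graph still satisfies all the hypotheses, and Lemma~\ref{le:cube} grounds the induction at the cube. You would need this contraction both to make your induction well founded and to identify the cycle along which the deleted zone must later be re-inserted.
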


Note that the property of a 3-connected planar graph having a
unique planar embedding~\cite{T63} is used only for identifying faces of $G$.
It is not explicitly used in the characterization. This is because neither the number of zones nor
the number of faces in a zone depend on that unique embedding.

\section{Proof of the sufficiency}

\label{sufficiency} Our proof is constructive. The idea of our
construction is to delete zones one after another from the given
graph $G$ until we reach the graph of a cube. The graph of a cube
is the smallest graph satisfying the sufficient condition. Then
from a cube we construct a zonohedron by adding zones in reverse
order one after another.

The pair of faces in which two zones of $G$ intersect is called a {\it face pair}.
We define the {\it length} of a zone of $G$ as the number of faces in it.
Since two zones divide each other into two non-empty equal chains of faces,
the length of a zone is even and is at least four.

\subsection{Deleting zones from $G$}

We first define the deletion of a zone $z$. By {\it contraction}
of a zone edge $e=(u,v)$ we mean the replacement of $u$ and $v$
with a single vertex whose incident edges are the edges (other
than $e$) that were incident to $u$ or $v$.
%Let $f=(u_1,v_1,u_2,v_2)$ be a zone face of $z$.
%Let $e_1=(u_1,v_1)$ and $e_2=(u_2,v_2)$ be two edges of $f$
%that are also zone edges of $z$.
Let $f$ be a zone face of $z$ and let $e_1$ and $e_2$ be two edges of $f$
that are zone edges of $z$.
We define the {\it contraction} of $f$ as follows.
We contract $e_1$ and $e_2$ into two vertices $w_1$ and $w_2$ respectively.
$w_1$ and $w_2$ now have two edges between them.
We replace these two edges by a single one (and
keep the other edges incident to $w_1$ and $w_2$ unchanged.)
We define the {\it deletion} of a zone $z$ as the contraction
of all zone faces of $z$.
See Figure~\ref{fig:fig72}.

\begin{figure}[!htbp]
\begin{center}
%\hfil \epsfbox{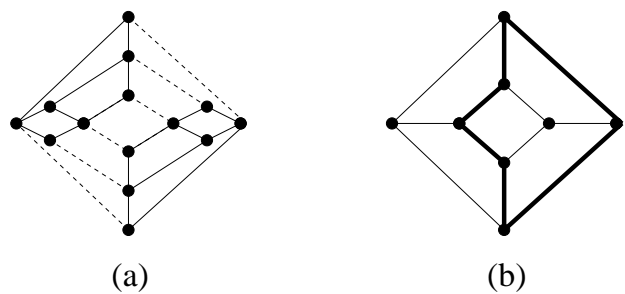} \hfil
{\includegraphics[height=1.5in]{fig72.eps}} \caption{(a)
Illustration of deletion of a zone from $G$, (b) the graph $G'$
after deletion of the zone. The heavily drawn edges show the cycle
resulting from the deleted zone.} \label{fig:fig72}
\end{center}
\end{figure}

Since a face $f$ belongs to exactly two zones, $z$ does not self
intersect. It implies that each contracted face of $z$ results
into an edge after the deletion. So as a whole, deleting $z$
results in a cycle whose length is same as that of $z$, and we call
this cycle as the \emph{zone cycle} corresponding to $z$.

We identify all the zones of $G$ as follows. For each face of $G$
by traversing the edges in circular order we
find the pair of opposite edges. %, it takes linear time.
%Then we start finding the zones.
Starting from a face $f$, we identify the two zones  that have $f$
in common. For each of them we group their adjacent faces one
after another based on opposite edges and check any intersection
within a zone before we come back to $f$. Similarly, we approach
other face pairs having opposite edges that have not been
encountered yet. Since every two zones intersect exactly two faces
the total number of times a face is traversed is at most half the
number of its edges. %, so finding zones takes linear time.

Once we identify the zones of $G$ we delete them one by one until
we reach the graph of a cube, for which the following observation
is obvious.

\begin{lemma}
\label{le:cube}
$G$ is the graph of a cube iff $G$ has three zones of length four.
\end{lemma}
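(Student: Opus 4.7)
The plan is to prove both directions, using the geometric description of a cube for the forward implication and a counting-plus-reconstruction argument for the converse. For the forward direction, the graph of a cube clearly has three zones of length four: the cube has three mutually perpendicular zone axes, and for each axis exactly four faces have their normals perpendicular to it, yielding a zone of four faces.

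For the converse, suppose $G$ satisfies the hypotheses of Theorem~\ref{thm311} and has exactly three zones, each of length four. Since every face lies in exactly two zones, double-counting face--zone incidences yields $2F = 3 \cdot 4 = 12$, hence $F = 6$. Because all faces are quadrilaterals, $2E = 4F$ gives $E = 12$, and Euler's formula then yields $V = 2 + E - F = 8$. The handshake lemma forces every vertex to be 3-valent.

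To finish, I would reconstruct $G$ directly and identify it with the cube graph. Pick any face $F_0 = v_0 v_1 v_2 v_3$; each $v_i$ has two neighbors on $F_0$ and, being 3-valent, exactly one off-face neighbor $u_i$. The face of $G$ sharing the edge $v_i v_{i+1}$ with $F_0$ (indices mod $4$) is a quadrilateral whose two remaining vertices must be $u_{i+1}$ and $u_i$, forcing the edge $u_i u_{i+1}$ to be present. Consequently $u_0 u_1 u_2 u_3$ is a 4-cycle, accounting for the sixth face, and all $8$ vertices and $12$ edges are thereby identified. The resulting graph is visibly the cube graph.

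The main (mild) obstacle is in the reconstruction step: one must exclude degeneracies such as $u_i = u_j$ or $u_i = v_j$ for $i \ne j$ and, relatedly, double edges arising during face tracing. These are ruled out by 3-connectivity together with the fact that any two quadrilateral faces share at most one edge, so the face-tracing is completely forced and the cube is recovered uniquely.
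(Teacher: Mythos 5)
Your proof is correct, but note that the paper offers no argument at all for this lemma---it is introduced with ``for which the following observation is obvious'' and left unproved---so there is no ``paper approach'' to match. What you supply is strictly more than the paper does: the forward direction from the geometry of the cube, and for the converse a clean counting argument (each face lies in exactly two zones, so $2F=3\cdot 4$ gives $F=6$; quadrilateral faces give $E=12$; Euler gives $V=8$; the handshake lemma plus 3-connectivity forces 3-regularity) followed by a forced face-tracing reconstruction. Two small points worth making explicit if you write this up: the double count $2F=12$ relies on the standing hypothesis of Theorem~\ref{thm311} that every face belongs to exactly two zones, which is the context in which the lemma is invoked; and in the degeneracy exclusion, the fact that two faces of a 3-connected plane graph share at most one edge is what rules out $x=v_{i-1}$ as the fourth vertex of the face across $v_iv_{i+1}$, while $u_i=u_{i+2}$ is excluded either by the same fact or simply because $V=8$ leaves exactly four vertices off $F_0$. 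With those details spelled out, your argument is a complete and correct justification of an assertion the paper merely asserts.
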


%\begin{proof}
%One direction is obvious: a cube has three zones of length four.
%For the other direction, observe that if $G$ has four zones,
%then in its realization there will be four different zone axes,
%which is not true for a cube.
%Finally, if a zone $z$ of $G$ has length six or more,
%then $z$ is intersected by three or more other different zones,
%which is a contradiction by the above argument. %\end{proof}

Due to the above lemma we only delete the zones of size six or
more to reach the graph of a cube. The following lemma will prove
that we can successfully do that.

%The details of the proof is included in the appendix.
% to reach the graph of a cube.

\begin{lemma}
\label{lem71} Let $G$ be a graph of a zonohedron with more than
three zones. Let $G'$ be the graph after deleting a zone $z$ from
$G$. Then $G'$ satisfies the conditions of Theorem~\ref{thm311}.
\end{lemma}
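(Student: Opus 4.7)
The plan is to verify each of the four conditions of Theorem~\ref{thm311} for $G'$ by bookkeeping how the zone-deletion contraction transforms faces, zones, and vertices. Writing $z = \{f_0,\ldots,f_{k-1}\}$ in cyclic order with zone edges $e_i$ shared by $f_i$ and $f_{i+1}$, and remaining edges $a_i,b_i$ of $f_i$, the contraction collapses each $e_i$ to a vertex $v_i$ and fuses $a_i,b_i$ into a single edge $c_i = v_{i-1}v_i$ of $G'$; the $c_i$'s constitute the zone cycle of length $k$ in $G'$.

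First I would handle the local conditions. Every non-$z$ face of $G$ is either untouched or has exactly one of its four edges replaced by some $c_i$, so it remains a quadrilateral in $G'$. For each zone $z' \ne z$ of $G$, Lemma~\ref{lem355} gives $|z \cap z'| = 2$, and the two replacement edges reconnect $z' \setminus z$ into a cycle of length $|z'|-2$ in $G'$, which I take to be the corresponding zone. A face of $G'$ stays in the same two zones it occupied in $G$. For any two zones $z_{1,G'}, z_{2,G'}$ of $G'$ coming from $z_1, z_2 \ne z$ of $G$, the intersection pair $\{h_1,h_2\} = z_1 \cap z_2$ of Lemma~\ref{lem355} cannot lie in $z$ (else each $h_j$ would sit in three zones), so it survives. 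By Lemma~\ref{lem354} the two chains $C_1, C_2$ of $z_1$ between $h_1$ and $h_2$ are matched by parallel pairing, and since (by the argument of Lemma~\ref{lem353}) a parallel pair is either wholly in $z$ or wholly outside, the two chains lose the same number of faces and remain equal in length. For non-emptiness I would first show that under the hypothesis every zone of $G$ has length at least six: a length-four zone has only two parallel pairs, so its faces meet only two other zones in total, forcing $G$ to have exactly three zones; more than three zones therefore rules out length-four zones, ensuring each chain in $G'$ retains at least one face.

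Planarity of $G'$ is immediate since contraction preserves planarity, and the main obstacle is 3-connectivity. I plan to argue by lifting a hypothetical separating pair $\{x,y\}$ of $G'$ back to $G$: each contracted vertex $v_i$ lifts to the two endpoints of $e_i$, so the preimage of $\{x,y\}$ has at most four vertices. Cases where at most one of $x,y$ is a $v_i$ reduce directly to 3-connectivity of $G$, while the case $x = v_i, y = v_j$ requires exploiting the cyclic structure of the zone $z$ (now of length at least six) and the fact that every pair of zones in $G$ meets in two faces in order to construct a detour around the four-element preimage. I expect this case analysis to be the principal difficulty, since edge contraction in general destroys 3-connectivity; the zonal structure of $G$ together with the lower bound on zone lengths established above should make the argument go through.
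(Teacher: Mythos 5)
Most of your proposal tracks the paper's own argument: the bookkeeping showing faces remain quadrilaterals and keep their two zones, the equal-chains argument via the parallel pairing of Lemma~\ref{lem354}, and the observation that more than three zones forces every zone of $G$ to have length at least six (your counting, that a zone of length $2k$ meets exactly $k$ other zones, is the same fact the paper extracts from Lemma~\ref{le:cube}) are all sound and essentially what the paper does. Two small inaccuracies: a non-$z$ face can have \emph{two} of its edges replaced by fused edges (one for each of the two zones it belongs to), not just one, though it still remains a quadrilateral; and ``the two chains lose the same number of faces'' should be sharpened to ``exactly one each,'' which follows from Lemma~\ref{lem354} applied to the pair $z\cap z_1$ and is what actually combines with length at least six to give non-emptiness. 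The paper argues this by noting that the two faces of $z\cap z_1$ are antipodal on the cycle $z_1$ and hence fall one into each chain.

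The genuine gap is 3-connectivity, which you explicitly leave unresolved (``I expect this case analysis to be the principal difficulty\dots should make the argument go through''). Worse, the case you dismiss as reducing ``directly to 3-connectivity of $G$'' --- exactly one of $x,y$ a contracted vertex $v_i$ --- does not so reduce: deleting $v_i$ and $y$ from $G'$ corresponds to deleting the three vertices $u_i$, $u_i'$, $y$ from $G$, and 3-connectivity of $G$ says nothing about 3-element cutsets. So two of your three cases are open, not one. The paper takes a different route: it first shows every contracted vertex has degree at least three (if $d(w)=2$ then $d(u_1)=d(u_2)=3$, the two non-$z$ faces at $u_1,u_2$ would lie in the same pair of zones $z_1,z_2$, forcing those zones to cut each other into single-face chains and hence $G$ to have only three zones, contradicting the hypothesis), and then argues that three internally disjoint $u$--$v$ paths in $G$ survive the contraction because each contracted edge lies on at most one of them. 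That second step is itself terse (two disjoint paths through the two endpoints of a contracted edge would merge at the new vertex), but it is a committed argument, whereas in your version the central condition of the lemma remains unproven until the $x=v_i,\ y=v_j$ and $x=v_i,\ y$-ordinary cases are actually carried out.
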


\begin{proof}
%To prove that $G'$ is 3-connected. In $G$ any two vertices $u$ and
%$v$ had at least three disjoint paths and any contracted edge of
%$z$ can be in at most one of these paths.
%
%
%
%We first show that $G'$ is 3-connected. In $G$ any two vertices
%$u$ and $v$ had at least three disjoint paths and any contracted
%edge of $z$ can be in at most one of these paths.

To prove that $G'$ is 3-connected, we first show that the new
vertex $w$ in $G'$ obtained by contracting an edge $e=(u_1,u_2)$
of $z$ has degree three or more. Suppose for a contradiction that
$d(w)<3$. Since $d(u_1)\geq 3$, $d(u_2)\geq 3$ then $d(w)\nless 2$
. If $d(w)=2$ then $d(u_1)=d(u_2)=3$. Suppose $u_1$ and $u_2$
belong to the faces $f_1$ and $f_2$ other than the
 faces $f_{z}$ and $f'_{z}$ of $z$ in $G$. If $f_1$ (similarly $f_2$) belongs to zones
 $z_1$ and $z_2$, then $f_2$ ($f_1$) also belongs to $z_1$ and
 $z_2$ as illustrated in Figure~\ref{fig:fignew1}. According to
 Theorem~\ref{thm311}, $z_1$ and $z_2$ divide each other into two
 non-empty equal chains of faces, each having exactly
 one face. Since each pair of zones intersect each other into two
 faces the number of zones in $G$ cannot be greater than three, a
 contradiction to the assumption that $G$ contains more than three
 zones.

\begin{figure}[!htbp]
\begin{center}
%\hfil \epsfbox{fig72.eps} \hfil
{\includegraphics[height=1.5in]{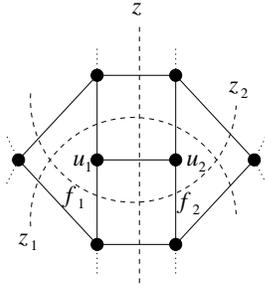}} \caption{A
zonohedral graph $G$ having $d(u_1)=d(u_2)=3$ (dashed lines
represent the zones).} \label{fig:fignew1}
\end{center}
\end{figure}

We now show that, there exist at least three disjoint paths
between any two vertices in $G'$ .In $G$ any two vertices $u$ and
$v$ had at least three disjoint paths and any contracted edge of
$z$ can be in at most one of these paths. Now in $G'$, if $u$ and
$v$ are contracted together, then we are done. If one of them, say
$u$, was contracted with its neighbor to a new vertex $w$
(similarly if none of $u$ and $v$ was contracted), then $w$
(similarly $u$) maintains those three disjoint paths to $v$,
possibly with smaller path lengths.

By deleting $z$ we have neither modified the faces of $G$ nor
introduced new faces. Hence if all the faces of $G$ belongs to
exactly two zones then every face of $G'$ belongs to exactly two
zones.

Next we show that any two zones in $G'$ divide each other into two
non-empty equal chains of faces. Let $z_1$ and $z_2$ be two zones
other than $z$ in $G$. Let $(f,f')$ be the face pair of $G$ at
which $z_1$ and $z_2$ intersect each other. Let the two equal
chains of faces of $z_1$ between $f$ and $f'$ be $l_1$ and $l_2$.
We will show that in $G'$, $l_1$ and $l_2$ have equal length of at
least two. A similar argument holds for $z_2$ and allows to
complete the proof.

Consider the intersection of $z$ and $z_1$. Let $(f_1,f_2)$ be the
face pair of $G$ at which $z$ and $z_1$ intersect. Since
$(f_1,f_2)$ divides $z_1$ into two other equal chains of faces,
$f_1$ is in $l_1$ (similarly in $l_2$) if and only if $f_2$ is in
$l_2$ (similarly in $l_1$). W.l.o.g. assume that $f_1$ is in
$l_1$. After deleting $z$, $z_1$ loses exactly two faces: $f_1$
from $l_1$ and $f_2$ from $l_2$. So in $G'$, $l_1$ and $l_2$ are
of equal length. Moreover, by Lemma~\ref{le:cube}, in $G$, $z_1$
has length at least six. So in $G'$, $l_1$ and $l_2$ have length
at least two.
%In other words, in $G'$, $z_1$ is divided into two equal pieces at $(f_1,f_2)$.
%Similar argument holds for $z_2$ also.
%Therefore, in $G'$, $z_1$ and $z_2$ intersect into two faces
%and divide each other into two equal non-empty chains of faces.
\qed
\end{proof}

\iffalse ***********************************************
\begin{corollary}
\label{co:zone_cycle1}
For any face pair $(f,f')$ of $G'$, $f$ and $f'$
are in two connected components of $G$ separated by the zone cycle of $z$.
\end{corollary}

\begin{corollary}
\label{co:zone_cycle2}
Each zone of $G'$ is divided into two equal chains of faces by
the zone cycle of $z$.
\end{corollary}
****************************************************\fi

%After $G'$ becomes the graph of a cube, we realize $G$ to a cube.

\subsection{Adding 3D zones}

Let the current zonohedron be $P'$ and its graph be $G'$.
Let $G$ be the graph from which $G'$ was obtained by deleting a zone.
We will add to $P'$ a zone $z$ corresponding to the deleted zone of $G$ as follows.

Let $P$ be the resulting polyhedron after adding $z$ to $P'$.
Let $c$ be the cycle (of edges) in $P'$ that corresponds to the zone cycle
of the deleted zone of $G$.
By Lemmas \ref{lem71} and \ref{lem354}, $c$ divides each zone of $P'$
into two equal chains of faces where faces in one chain have parallel pairs in
the other chain. As a whole, $c$ divides the set of faces of $P'$
into two subsets $P_1'$ and $P_2'$ where faces in $P_1'$ ($P_2'$) have
parallel pairs in $P_2'$ ($P_1'$).

To get $P$ we expand each edge of $c$ to a rhombus in a common
direction $d$. Clearly the graph of $P$ is $G$. See
Figure~\ref{fig:figcube}. What remain to be proven are: (i)~the
faces of $z$ are in parallel pairs, and (ii) there exists a $d$
such that $P$ is convex. We prove them in the following two lemmas
respectively.

\begin{figure}[!htbp]
\begin{center}
%\hfil \epsfbox{fig21.eps} \hfil
{
%\ingraphics[height=0.5in]{figcube.eps}}
\input{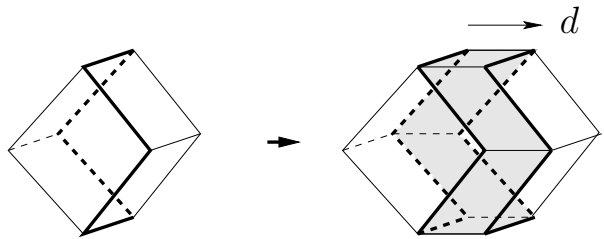}}
\caption{Adding a pseudo prism to a cube. Heavily drawn lines show $c$
and shaded faces are the newly added zone.
}
\label{fig:figcube}
\end{center}
\end{figure}

\begin{lemma}
\label{le:in_parallel} The faces of $z$ are in parallel pairs.
\end{lemma}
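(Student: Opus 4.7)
The plan is to reduce the claim to a statement about the cycle $c$ in $P'$: that the edges of $c$ come in parallel pairs. If $e$ and $e'$ are two parallel edges of $c$, then the rhombi obtained by extruding them in the common direction $d$ are spanned by the same pair of direction vectors and are therefore parallel. Hence, once such a pairing of edges in $c$ is established, the faces of $z$ automatically come in parallel pairs.

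First I would identify, for each edge $e$ of $c$, a zone of $P'$ that $e$ is a zone edge of. The edge $e$ was produced when the corresponding face $f$ of the deleted zone in $G$ was contracted: the two edges of $f$ that are \emph{not} zone edges of the deleted zone were merged into $e$. By Lemma \ref{lem1}, $f$ belongs to exactly one zone $z_1$ of $G$ besides the deleted one, and, because $f$ is a quadrilateral whose two pairs of opposite sides are the zone edges of its two zones, the merged lateral edges are precisely the two zone edges of $z_1$ incident to $f$. After the contraction they become the single edge $e$, which is a zone edge of $z_1$ regarded as a zone of $G'$. Since by construction $P'$ realizes $G'$ as a zonohedron, $e$ lies along the axis direction of $z_1$ in $P'$.

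Next I would count, for each zone $z_1$ of $P'$, how many edges of $c$ come from it. By Lemmas \ref{lem71} and \ref{lem354}, $c$ divides $z_1$ into two non-empty equal chains of faces. Since $z_1$ is itself a cyclic sequence of faces joined consecutively along its zone edges, cutting this cycle into two chains must happen at exactly two places, each of them a zone edge of $z_1$. Therefore $c$ contains exactly two zone edges of $z_1$, and they are parallel to each other (both parallel to the axis of $z_1$ in $P'$). Summing over all zones of $P'$ partitions the edges of $c$ into parallel pairs, and combined with the opening observation this finishes the proof. The main subtle point is the first step, namely verifying that the merged edge $e$ inherits the axis direction of $z_1$ from $P'$; the rest is crossing-bookkeeping between $c$ and each zone of $P'$.
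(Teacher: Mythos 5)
Your proposal is correct and follows essentially the same route as the paper: reduce the claim to showing the edges of $c$ pair up in parallel, observe that each edge of $c$ is a zone edge of some zone of $P'$, and use the fact that each such zone crosses $c$ exactly twice (via Lemmas~\ref{lem71} and~\ref{lem354}) to pair it with a parallel partner along the common zone axis. You supply more detail than the paper does—in particular the tracing of the contracted edge back to the second zone of the contracted face, and the explicit counting argument—but the underlying idea is identical.
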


\begin{proof}
It suffices to show that edges of $c$ are in parallel pairs.
%$c$ divides $P'$ into two sets of faces.
%By Corollary~\ref{co:zone_cycle1}, for each face pair $(f,f')$ of $P'$,
%$f$ and $f'$ are in the two sets respectively.
% Now
Consider an edge $e$ of $c$. Let $z$ be the zone of which $e$ is a
zone edge.
%By Corollary~\ref{co:zone_cycle2},
By Lemma~\ref{lem71}, $z$ crosses $c$ twice.
%$c$ divides $z$ into two non-empty pieces.
Let $e'$ be the other edge of $c$ at which $z$ crosses $c$. Since
$e$ and $e'$ belong to the same zone $z$, they are parallel (to
the zone axis of $z$). \qed \end{proof}

%For the next lemma, the proof is in the appendix.

\begin{lemma}
\label{le:d}
There exists $d$ such that $P$ is convex. Moreover,
$d$ can be found in $O(h\log h)$ time where $h$ is the number of
faces of $z$.
\end{lemma}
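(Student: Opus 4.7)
The plan is to reduce the convexity of $P$ to a system of $O(h)$ strict linear inequalities in $d\in\mathbb{R}^3$, prove feasibility of the system, and then compute $d$ with a standard 3-dimensional linear-programming routine.

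First I would identify exactly which dihedrals become constraints. Faces and edges strictly interior to $P_1'$ or to $P_2'$ are copied over to $P$ unchanged, so their dihedrals remain convex. The only new dihedrals live along edges incident to the new rhombi $r_e$, $e\in c$, and split into two types: (A) at each edge of $c$ shared in $P$ between $r_e$ and its adjacent $P_1'$-face $f_1$, the condition for a convex dihedral reduces to the open half-space condition $d\cdot n_{f_1}>0$, where $n_{f_1}$ is the outward normal of $f_1$ (the antipodal inequality on the $P_2'$-side follows by parallelism); (B) at each new zone edge joining consecutive rhombi $r_e$, $r_{e'}$ at a shared vertex $u\in c$, the condition reduces to $d\cdot(\vec{e}\times\vec{e'})>0$, with orientation fixed by the cyclic order of $c$ in the unique planar embedding of $G$. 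Each is a strict linear inequality in $d$, giving $O(h)$ constraints in total.

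Next I would prove feasibility by analyzing $c$ in the Gauss map of $P'$. The cycle $c$ lifts to a closed curve $\gamma$ of $h$ geodesic arcs dividing $S^2$ into two regions $R_1$, $R_2=-R_1$ containing the normals of $P_1'$- and $P_2'$-faces respectively; the antipodal relation follows from Lemma~\ref{lem354} and the parallel-face property. The key claim is that $R_1$ lies in an open hemisphere of $S^2$; granted this, any $d$ witnessing the hemisphere satisfies every type-(A) inequality, and a short orientation argument---using that the wedge of $P_1'$-faces meeting at each vertex $u\in c$ is a convex corner, so that $\vec{e}\times\vec{e'}$ lies in the positive cone of the outward normals of those $P_1'$-faces at $u$---shows the same $d$ handles every type-(B) inequality.

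Finally, once the $O(h)$ inequalities have been enumerated (a single traversal of $c$ and its incident faces), a feasible $d$ can be found by a standard deterministic 3-D linear-programming algorithm in $O(h\log h)$ time (or $O(h)$ using Megiddo's algorithm), giving the stated bound. The step I expect to be the main obstacle is the feasibility claim: a merely connected, antipode-free open region of $S^2$ need not fit inside any open hemisphere in general, so the proof must exploit the specific structure of $\gamma$---namely that it consists of exactly two arcs on each of the $h/2$ great circles corresponding to zones of $P'$ and that those two arcs on each great circle are antipodally placed (again by Lemma~\ref{lem354}). I would attempt the claim by selecting one representative $P_1'$-normal per zone of $P'$ and proving that the intersection of the $h/2$ resulting open hemispheres is non-empty.
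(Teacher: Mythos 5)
Your reduction of convexity to $O(h)$ strict linear inequalities in $d$, and the concluding three\-/dimensional LP step, are sound; in fact once a valid $d$ is known the new polyhedron is the Minkowski sum of $P'$ with a segment in direction $d$, so your type-(B) constraints at consecutive rhombi are implied by the type-(A) ones, and either Megiddo's algorithm or the half-space intersection the paper cites gives the stated time bound. The genuine gap is precisely the step you flag yourself: feasibility. The entire content of Lemma~\ref{le:d} is the claim that the normals of the $P_1'$-faces lie in a common open hemisphere, and you do not prove it --- you only say you ``would attempt'' it by picking one representative normal per zone and intersecting $h/2$ open hemispheres. That plan is underspecified on both counts: nothing in the sketch shows that this intersection is non-empty (an antipode-free family of points on $S^2$ does not automatically admit a common strictly supporting hemisphere), and even if a direction $d$ works for the representatives, nothing forces $d\cdot n_f>0$ for the non-representative faces of $P_1'$. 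A secondary inaccuracy: the Gauss-map images of the edges of $c$ do not in general concatenate into a closed curve, since at a vertex $v$ of $c$ the two incident edges of $c$ need not bound a common face; their arcs are two possibly non-adjacent edges of the spherical polygon of $v$, so $\gamma$ must first be closed up through those polygons before one can speak of the regions $R_1$ and $R_2$.

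The paper closes exactly this feasibility gap by a different device: induction on the order in which zones were deleted. It assumes a valid direction $d''$ for the previous cycle $c''$ in the smaller zonohedron $P''$ and shows, via the great circles $g'$ and $g''$ in the Gauss map, that after re-expanding the zone $z'$ the same $d''$ still sees exactly one side of $c$ (Case 1: $c''$ and $c'$ meet in two vertices), or does so after an arbitrarily small perturbation handling the two faces created from the shared edges (Case 2). If you wish to keep your direct, non-inductive formulation, the hemisphere claim is the statement you must actually establish; the antipodal-arc structure you identify is necessary but, as you yourself suspect, not obviously sufficient as written.
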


\begin{proof}
To prove that $P$ is convex it suffices to prove that there exists
$d$ such that no face of $P'$ is parallel to $d$ and viewing $P'$
orthogonally from $d$ keeps $c$ as the boundary of the projection
(and thus makes all the faces in one side of $c$ visible and the
faces in other side invisible). We prove this using an induction
on the number of zones of $P'$.

For the basis of the induction we consider the smallest zonohedron
$P'$ (which is a cube) with three zones. Clearly, in a cube there
are four possible $c$ each of which divides the faces of the cube
into two sets of faces $P_1'$ and $P_2'$ where faces in $P_1'$
($P_2'$) have parallel pairs in $P_2'$ ($P_1'$). For each such $c$
there exists a $d$ where exactly the faces of $P_1'$ ($P_2'$) are
visible and the faces of $P_2'$ ($P_1'$) are invisible. Moreover,
$d$ is the resultant vector of the outer-normals of those visible
faces and is not parallel to any face of the cube.

%Assume that for $m'=m-1$ zones $P''$ has a direction $d'$ such
%that no face of $P''$ is parallel to $d'$ and adding the $(m-1)$th
%zone $z'$ with zone axis $d'$ results in $P'$.
%We now prove that
%there exists $d$, for the zone cycle $c$ of $P'$ corresponding to
%the zone $z$ of $P$.
%Let $P''$ be the zonohedron  $c'$ be the zone cycle of $P''$ whose
%expansion in $d'$ resulted in $z'$ of $P'$.
Let $P''$ be the zonohedron whose zone cycle $c'$ was expanded in
direction $d'$ to obtain $P'$. Let $z'$ be the zone added to $P'$
due to this expansion (see Figure~\ref{fig:figd}(b)).

\begin{figure}[!htbp]
\begin{center}
{
\input{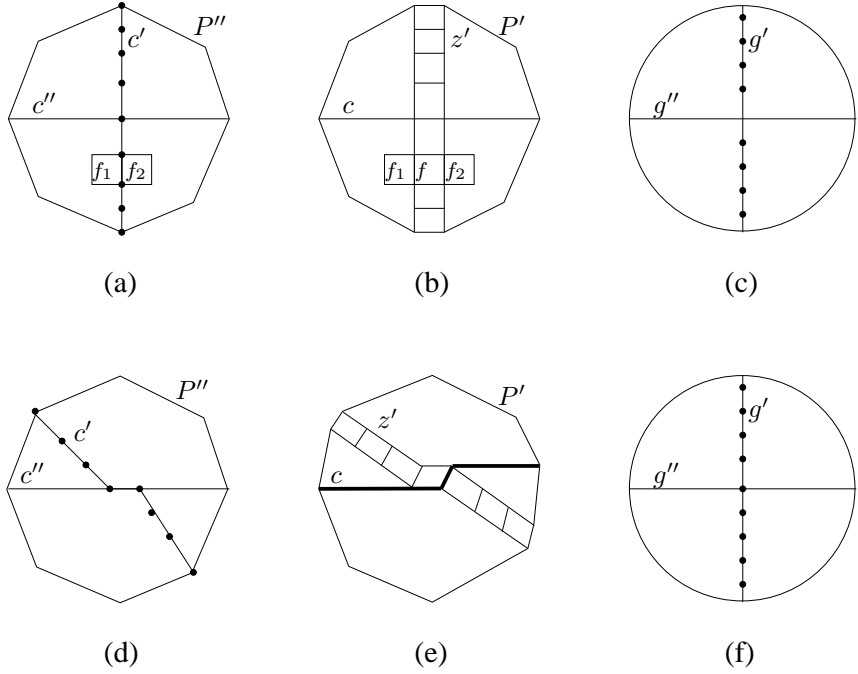}}
\caption{(a) The two zone cycles $c''$ and $c'$ in $P''$, (b) The
corresponding zone $z'$ and cycle $c$ in $P'$, (c) The great
circles $g''$ and $g'$ representing $c''$ and $c'$ in the Gauss
map (for Case 1), (d) The two cycles $c''$ and $c'$ sharing two
edges $e, e'$ in $P''$, (e) The corresponding zone $z'$ and cycle
$c$ in $P'$ and (f) The great circles $g''$ and $g'$ representing
$c''$ and $c'$ in the Gauss map (for Case 2). } \label{fig:figd}
\end{center}
\end{figure}

Remember that in $P'$, $c$ divides $z'$ into two chains of faces
where faces in one chain have parallel pairs in the other chain.
So $c$ contains two zone edges of $z'$. Let $e_1$ and $e_2$ be
those two edges. Moreover, since $c$ divides the faces of $P'$
into two sets $P_1'$ and $P_2'$ where faces in $P_1'$ ($P_2'$)
have parallel pairs in $P_2'$ ($P_1'$), the faces of $P''$
(without $z'$) are also divided by the cycle $c''=c'\setminus
\{e_1,e_2\}$ into two subsets of faces where faces in one set have
parallel pairs in the other set. Therefore, by induction
hypothesis there is a direction $d''$ which is not parallel to any
face of $P''$ and from which all faces in one side of $c''$ are
visible and all faces in the other side of $c''$ are invisible.
Let the set of visible and invisible faces be $P_1''$ and $P_2''$
respectively.

%of $P''$ was the previous condition of $c$ of $P'$.
%Since $P''$ is a
%zonohedron by induction hypothesis, for $c''$ of $P''$ there is a
%direction $d''$ from which all faces on one side is visible from $d''$.

For the remaining proof we will switch our attention to the Gauss
map.
%prove that in $P'$ faces in one side of $c$
%are visible and the faces in the other side of $c$ are invisible.
%We will move to the Gauss map for that.
Let $g''$ be the great circle whose plane is perpendicular to
$d''$. Let the two half spheres defined by $g''$ be $h_1$ and
$h_2$. Assume that $h_1$ ($h_2$) is visible (invisible) to $d''$.
So the normal-points of the faces of $P_1''$ and $P_2''$ are
within $h_1$ and $h_2$ respectively.

%We now add the zone $z'$ corresponding to the cycle $c'$.
In $P''$, $c''$ and $c'$ must intersect (Figure~\ref{fig:figd}(a))
possibly sharing some edges (Figure~\ref{fig:figd}(d)). Hence we
have two cases.

%Suppose that $P'$ is obtained from $P''$ by adding the zone $z'$
%through the expansion of the edges of the cycle $c'$ of $P''$ as
%shown in Figure~\ref{fig:figd}. Since $P''$ is a zonohedron, by
%Lemma~\ref{lem71} for every cycle $c$ of $P''$ there is a
%direction $d$ from which all faces on one side is visible from
%$d$.

{\bf Case 1:} $c''$ and $c'$ intersect in a pair of vertices.

For this case we will prove that $d''$ will work as $d$. Let $f_1,
f_2$ be two arbitrary adjacent faces of $P''$ whose common edge
$e$ is in $c'$. Assume that the normal points of $f_1$ and $f_2$
are in $h_1$ (similarly in $h_2$). After expanding $P''$ to $P'$,
let the face created from $e$ be $f$. Since by expansion the
normal point of $f_1$ and $f_2$ remain unchanged, it suffices to
prove that the normal point of $f$ is also in $h_1(h_2)$ (see
Figure~\ref{fig:figd}(b)). Since $f_1,f,f_2$ are three adjacent
faces of a zone (other than $z'$) of $P'$, their normal points
must lie on a great circle and the normal-point of $f$ is in the
geodesic arc connecting that of $f_1$ and $f_2$. Therefore, the
normal-point of $f$ must be within $h_1$($h_2$).

%So we simply take $d=d''$.

%Suppose that in figure~\ref{fig:figd}(a) adjacent faces $f_1$ and $f_2$ are
%visible from $d$. Now in the Gauss map, the geodesic arc of the
%great circle passing through the normal points of $f_1$ and $f_2$
%is also visible from $d$. The normal point of the newly added face
%$f$ must lie on this geodesic arc. Hence all the faces of $z'$ in
%one side of $c$ will always be visible from $d$.

%the normal point of the newly added face $f$ must lie on the
%geodesic arc of the great circle passing through the normal points
%of $f_1$ and $f_2$.

%the length of $c$ is increased by two.

{\bf Case 2:} $c''$ and $c'$ share some edges.

We will first prove that $c''$ and $c'$ share exactly two  edges.
Let the two great circles of $d''$ and $d'$ be $g''$ and $g'$
respectively. Edges/vertices of $c''$ ($c'$) represent points/arcs
of $g''$ ($g'$) respectively. (For $c'$ simply think its
edges/vertices as the zone faces/zone edges of $z'$ and for $c''$
simply think the edges/vertices of $c''$ as the zone faces/zone
edges of the zone that would be created if $c''$ were expanded in
direction $d''$). Now, $g''$ and $g'$ intersect into two antipodal
points and their corresponding two edges are only common in $c''$
and $c'$. See Figure~\ref{fig:figd}(d,e,f).

After creating $z'$ the two common edges become two parallel
faces. Let they be $f$ and $f'$. By the argument of Case 1, except
$f$ and $f'$ all faces in one side of $c$ are visible and all
faces in the other side are invisible from $d''$. If $f$ and $f'$
too are not parallel to $d''$ and are visible/invisible as
required, then $d=d''$, and we are done. Otherwise, we can always
take $d$ as $d''+\epsilon$, where $\epsilon$ is small enough such
that $f$ and $f'$ are no more parallel to $d$, they become
visible/invisible as required, and the visibility/invisibility of
all other faces remain the same. Note that there may be one more
case: it may be possible that $f(f')$ is supposed to be visible
(invisible) but is invisible (visible) from $d''$. Then by
symmetry of $P'$ we can simply interchange $f$ and $f'$ in $P_1'$
and $P_2'$ and thus take $d=d''$.

Now $d$ can be easily found as follows. From the above argument it
is clear that there exists a $d$ from which all the faces in one
side of $c$ are visible. In fact $d$ is a direction in the
intersection of the positive half-spaces (the positive half-space
of a face $f$ is the plane of $f$ from which $f$ is visible) of
the faces of $P_1$ adjacent to $c$. Hence determining $d$ takes
$O(h\log h)$ time~\cite{BHL04} where $h$ is the number of edges of
$c$. \qed
\end{proof}

\subsection{Running Time}
Now we examine the time complexity of the construction as a whole.
Finding the zones and the face pairs take linear time. Deletion of
zones also takes linear time. Final points of $P$ are calculated
by the amount of expansion of all zones of $G$ by
Lemma~\ref{le:d}. Thus the total time required for all expansion
is $O(h_1\log h_1 + h_2\log h_2 + \cdots + h_m\log h_m)$, where
$m$ is the number of zones of $G$ and $h_i$ is the number of faces
of the new zone at $i$-th expansion. Since all the faces are
quadrilaterals, at $i$-th step, $h_i$ new faces are created. Hence
the sum $h_1+h_2+\cdots+h_m$ is the total number of faces which is
$O(n)$, where $n$ is the number of vertices in $G$. Moreover, $h_1
= 4$ and from Lemma~\ref{lem355}, $h_i=h_{i-1}+2$, which implies
that $m=O(\sqrt{n})$.

\begin{theorem}
The number of zones in a zonohedron is $O(\sqrt{n})$.
\end{theorem}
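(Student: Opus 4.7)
The plan is to exploit the two structural lemmas already established, namely Lemma~\ref{lem355} (any two zones of $P$ share exactly two faces) and Lemma~\ref{lem1} (each face lies in exactly two zones), together with Euler's formula quoted in the Preliminaries.

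First I would double-count incidences of the form $(\{z_i,z_j\},f)$, where $\{z_i,z_j\}$ is an unordered pair of distinct zones of $P$ and $f$ is a face contained in both. By Lemma~\ref{lem355}, every unordered pair of zones contributes exactly two such incidences, so the total count is $2\binom{m}{2}=m(m-1)$, where $m$ denotes the number of zones. On the other hand, by Lemma~\ref{lem1} each face belongs to exactly two zones, so every face contributes to exactly one pair of zones, and the total count also equals $F$, the number of faces of $P$. Hence $F=m(m-1)$.

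Next I would apply Euler's formula. Since every face of a zonohedron is a quadrilateral, one has $2E=4F$, and $V-E+F=2$ then yields $F=n-2=\Theta(n)$. Combining this with $F=m(m-1)$ gives $m(m-1)=\Theta(n)$, and therefore $m=O(\sqrt{n})$.

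I do not anticipate a real obstacle here: both ingredients are already in hand and the counting is elementary. Equivalently, one can follow the incremental construction of Section~\ref{sufficiency}: in the final polyhedron every zone must, by Lemma~\ref{lem355}, cross each of the other $m-1$ zones in exactly two faces, and by Lemma~\ref{lem1} those intersections exhaust its faces, so every zone has length $2(m-1)$; summing lengths over all zones and dividing by two reproduces $F=m(m-1)$, and the bound $m=O(\sqrt{n})$ follows in the same way. This matches (and tightens) the running-time estimate preceding the theorem, where the construction already produces zones of lengths forming an arithmetic progression with common difference two.
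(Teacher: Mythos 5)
Your proof is correct, but it takes a genuinely different route from the paper's. The paper obtains $m=O(\sqrt{n})$ as a by-product of its running-time analysis of the constructive algorithm: the zone added at the $i$-th expansion step has $h_i$ faces with $h_1=4$ and $h_i=h_{i-1}+2$, so the total face count $\sum_i h_i=\Theta(m^2)$ is $O(n)$ and the bound follows. You instead double-count the face/zone-pair incidences directly: Lemma~\ref{lem1} assigns each face to exactly one unordered pair of zones, Lemma~\ref{lem355} assigns exactly two faces to each pair, so $F=2\binom{m}{2}=m(m-1)$, and Euler's formula with quadrilateral faces gives $F=n-2$, hence $m(m-1)=n-2$. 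What your approach buys is an exact identity rather than an asymptotic one (in fact $m=\Theta(\sqrt{n})$, recovering the classical facet count $m(m-1)$ for a $3$-dimensional zonotope with $m$ generic generators), and it is self-contained: it uses only the geometric lemmas of Section~3 and does not depend on the correctness of the zone-deletion/expansion machinery of Section~\ref{sufficiency}. What the paper's route buys is that the bound falls out for free from an analysis it needs anyway, and the intermediate quantities $h_i$ are exactly what its time bound requires. Your closing remark that every zone has length $2(m-1)$ in the finished polyhedron is also correct and immediately tightens the accompanying corollary as well.
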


\begin{cor}
The maximum number of faces in a zone is $O(\sqrt{n})$.
\end{cor}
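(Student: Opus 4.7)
The plan is to derive the corollary as an immediate consequence of the preceding theorem by a simple double-counting argument on face--zone incidences. I would fix an arbitrary zone $z$ and let $m$ denote the total number of zones of the zonohedron $P$. By Lemma~\ref{lem1}, every face of $P$ lies in exactly two zones, so every face of $z$ is shared between $z$ and exactly one other zone. By Lemma~\ref{lem355}, any zone $z'\neq z$ meets $z$ in exactly two faces. Combining these two facts, the faces of $z$ are partitioned according to the ``other'' zone each one belongs to, and each of the $m-1$ remaining zones contributes exactly two faces. Hence $z$ contains exactly $2(m-1)$ faces.

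Plugging in the bound $m=O(\sqrt{n})$ supplied by the preceding theorem gives $|z|=2(m-1)=O(\sqrt{n})$. Since $z$ was arbitrary, the maximum number of faces in any zone is $O(\sqrt{n})$, as claimed.

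I do not expect a real obstacle in this argument: the combinatorial content is already packaged in Lemmas~\ref{lem1} and~\ref{lem355}, and the theorem supplies the bound on $m$. The only point worth verifying carefully is that the assignment sending each face of $z$ to its ``other'' zone is a $2$-to-$1$ map onto the set of other zones --- this follows because no face lies in three or more zones (Lemma~\ref{lem1}, so the map is well defined) and each other zone contributes exactly two, and not zero or more, faces to $z$ (Lemma~\ref{lem355}, so every other zone lies in the image with exactly two preimages). As a byproduct the argument in fact yields the sharper statement that \emph{every} zone of $P$ has exactly $2(m-1)$ faces.
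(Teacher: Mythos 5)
Your proof is correct and follows essentially the same route as the paper's: both bound the faces of a zone by counting its pairwise intersections with the other $O(\sqrt{n})$ zones via Lemma~\ref{lem355}. Your version is slightly more careful in that it invokes Lemma~\ref{lem1} to confirm that every face of the zone is accounted for by exactly one other zone, which yields the exact count $2(m-1)$ (consistent with the paper's recurrence $h_i=h_{i-1}+2$), but the underlying argument is the same.
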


\begin{proof}
By Lemma~\ref{lem355}, every two zones intersects into two
parallel faces. So a zone can intersect $O(\sqrt{n})$ other zones
in $O(\sqrt{n})$ faces. \qed
\end{proof}

Therefore, the total running time of the construction algorithm is
$O(h_m\log (h_1 \cdot h_2 \cdot \cdots \cdot h_m))$ =
$O(\sqrt{n}\log(4\cdot 6 \cdot 8 \cdot \cdots \cdot (4+2m)))$ =
$O(\sqrt{n}\log(2^m(m+2)!))$ = $O(\sqrt{n}(m\log 2+m\log m))$ =
$O(\sqrt{n}(\sqrt{n}+\sqrt{n}\log n))$ = $O(n\log n)$.

%Each expansion takes $O()$ time and there can be at most $O(n)$
%zones. So computing $P$ takes a total of $O(n^2)$ time.

\begin{theorem}
\label{thm322} A zonohedron $P$ from a zonohedral graph $G$ can be
constructed in $O(n\log n)$ time, where $n$ is the number of
vertices in $G$.
\end{theorem}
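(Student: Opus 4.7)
The plan is to sum the running times of the three phases of the construction from Section~\ref{sufficiency} and check that they fit within an $O(n\log n)$ budget, using the bounds $m=O(\sqrt n)$ on the number of zones and $O(\sqrt n)$ on the size of each zone just established in this section.

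First I would handle the preprocessing and deletion phases. Identifying the zones and face pairs of $G$ amounts to walking around every face and pairing opposite edges; since by Lemma~\ref{lem1} each face belongs to exactly two zones, the combinatorial work is proportional to the sum of face degrees, which is $O(n)$ by Euler's formula. Deleting a zone is just the contraction of each of its quadrilateral faces and is purely local, and the deletions collectively touch each edge of $G$ only a constant number of times, so both phases together cost $O(n)$.

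The dominant term comes from the reconstruction phase, where the zones are re-added one at a time starting from the cube. By Lemma~\ref{le:d}, re-adding the $i$th zone costs $O(h_i\log h_i)$, where $h_i$ is the length of the corresponding zone cycle. The key structural facts are $h_1=4$ (the starting cube), $h_i=h_{i-1}+2$ by Lemma~\ref{lem355} (each newly added zone meets every previously added zone in exactly two parallel faces), hence $h_m=O(\sqrt n)$, and $\sum_i h_i=O(n)$ because the $i$th expansion contributes exactly $h_i$ fresh quadrilateral faces to a polyhedron with $\Theta(n)$ faces overall. Combining these,
\[
\sum_{i=1}^{m} h_i\log h_i \;\le\; (\log h_m)\sum_{i=1}^{m} h_i \;=\; O(\log n)\cdot O(n) \;=\; O(n\log n),
\]
which matches the bound asserted by the theorem.

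The part I expect to be slightly delicate is not any single phase in isolation but verifying the arithmetic progression $h_i = 2i+2$; this is where Lemma~\ref{lem355} is doing real work, as it forces each new zone to absorb exactly two additional faces from every previously added zone. Once that progression is in hand, the summation above is immediate and the overall $O(n\log n)$ bound follows.
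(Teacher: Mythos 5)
Your proposal is correct and follows essentially the same route as the paper: the same three-phase decomposition, the same facts $h_1=4$, $h_i=h_{i-1}+2$ via Lemma~\ref{lem355}, $\sum_i h_i=O(n)$, and $m=O(\sqrt n)$, differing only in that you bound $\sum_i h_i\log h_i$ by $(\log h_m)\sum_i h_i$ where the paper uses $h_m\sum_i\log h_i$ — both yield $O(n\log n)$.
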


\section{Recognizing a zonohedral graph}

\label{section:recog} Let $G$ be the given graph. $G$ can be
tested for 3-connected planar in linear time~\cite{NC88}. Testing
whether all faces of $G$ are even takes linear time. We already
discussed that finding zones and face pairs takes linear time.
Once the face pairs are determined, we can measure how a zone is
divided by its face pairs. For all zones it takes linear time in
total.

\begin{theorem}
\label{thm411} Given a graph $G$, recognizing whether $G$ is zonohedral
can be done in linear time.
\end{theorem}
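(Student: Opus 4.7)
The plan is to verify each of the conditions in Theorem~\ref{thm311} within linear time. First, I would test whether $G$ is 3-connected planar using the linear-time algorithm of \cite{NC88}, which also yields the unique planar embedding. Then, traversing the face list once, I would check that every face has length four; the total work is $O(n)$ since $\sum_f |f| = 2|E|$.

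Next I would identify all zones and, crucially, record for each face its cyclic position along each zone it belongs to. Starting from an unprocessed (face, edge) pair, I follow the opposite-edge relation around the zone, labeling each visited face with the current zone id and its cyclic index. Since each face is a zone face of exactly two zones, each face is visited at most twice in total, so the zone-identification pass runs in $O(n)$. During this pass I also verify that each face is encountered in exactly two zones; otherwise $G$ fails the ``exactly two zones'' condition.

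To test the pairwise intersection and chain conditions I build a table indexed by unordered zone pairs $(z_i, z_j)$. Since there are only $O(\sqrt{n})$ zones (as established in the preceding section), a two-dimensional array of size $O(n)$ suffices. For each face $f$ with zones $(z_i, z_j)$ I append $f$ together with its two recorded cyclic positions to the entry $(z_i, z_j)$. I then scan the table: each non-empty entry must contain exactly two faces $f, f'$, and for each such pair I check in $O(1)$ that $|p_{z_i}(f) - p_{z_i}(f')| = |z_i|/2$ and analogously for $z_j$. This equality is exactly the statement that $f$ and $f'$ split $z_i$ and $z_j$ into two non-empty equal chains (using $|z_i|,|z_j| \ge 4$). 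Since each face contributes to exactly one table entry, the total work is $O(n)$.

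The main obstacle is the chain-equality test: a naive approach that, for each pair of zones, walks around both zones from one shared face to the other would cost $\Theta(\sqrt{n})$ per pair and $\Theta(n^{1.5})$ in the worst case, breaking the linear bound. The trick to avoid this is to precompute, during the single zone-identification sweep, the cyclic position of each face within each of its two zones; with this precomputation the per-pair check collapses to a single arithmetic comparison, and the whole recognition procedure fits in $O(n)$ time.
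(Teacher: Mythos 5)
Your proposal is correct and follows essentially the same route as the paper: test 3-connectivity and planarity via~\cite{NC88}, check that all faces are quadrilaterals, identify zones by the opposite-edge relation (each face visited at most twice since it lies in exactly two zones), and then verify the face-pair and equal-chain conditions. The paper merely asserts that ``measuring how a zone is divided by its face pairs'' takes linear time in total; your precomputation of each face's cyclic position within its two zones, reducing each chain-equality test to an $O(1)$ arithmetic comparison, is a concrete (and correct) implementation of exactly that step.
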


Observe that our recognition of a zonohedral graph will also work
for recognizing the graph of a generalized zonohedron.

\begin{cor}
The graph of a generalized zonohedron can be recognized in linear
time.
\end{cor}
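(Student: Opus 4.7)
My plan is to test each condition of the characterization in Theorem~\ref{thm311} separately and ensure that no single check exceeds linear time. First I would run the linear-time 3-connected planarity test of~\cite{NC88}, which as a by-product produces a planar embedding; by Tutte's theorem this embedding is unique, so the faces of $G$ are well defined. A single sweep through the incidence lists then verifies that every face has length exactly four, at total cost $O(|E|)=O(n)$.

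Because every face is a quadrilateral, each of its edges has a unique \emph{opposite} edge inside that face, and a zone can be traced by walking from face to neighbouring face through pairs of opposite edges. I would discover all zones by starting a fresh trace from each face that has not yet been visited in both of its potential zones, stopping each trace when it closes into a cycle. Since each face lies on at most two zones, every face is entered at most twice during the whole procedure, so computing all the zones and recording, for every face, the unordered pair of zones containing it, runs in $O(n)$ time. During the traces I simultaneously check that every zone is a simple cycle of even length at least four and that every face ends up on exactly two zones; any violation immediately rejects $G$.

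What remains is to certify, for every pair of zones that meets, that they share exactly two faces and that those faces split each of the two zones into non-empty chains of equal length. Bucketing the faces by their zone-pair (e.g.\ by radix sorting on the pair of zone identifiers) produces the list of face pairs in linear time. I then perform a single walk around each zone, tagging every face with its position along the zone; for each face pair $(f,f')$ recorded on a zone $z$, the equal-chain condition reduces to checking that the two positions differ by exactly $|z|/2$. Each face pair is inspected a constant number of times from each of its two zones, so the total cost is again $O(n)$.

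The potential obstacle is making the zone-pair check truly linear rather than quadratic in the number of zones. The decisive observation is that all work can be charged to incidences between a face and a zone, of which there are exactly $2F=O(n)$; the walk-and-tag procedure above does precisely this, and sidesteps any explicit enumeration of zone pairs. Combining the four steps establishes the linear bound claimed by the theorem.
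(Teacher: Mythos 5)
Your proposal is a clean linear-time recognizer for the conditions of Theorem~\ref{thm311}, i.e., for \emph{zonohedral} graphs in Coxeter's sense, but the statement you were asked to prove concerns \emph{generalized} zonohedra, and that is where the argument breaks. In a generalized zonohedron the faces need not be parallelograms: they are only required to have their edges in parallel pairs, so a face may be a hexagon, an octagon, etc.\ (Figure~\ref{fig:fig11}(a)). Your very first structural check, ``every face has length exactly four,'' would therefore reject the graph of any generalized zonohedron that is not already a zonohedron. The same problem propagates into the zone-tracing step: a face of length $2k$ has $k$ pairs of opposite edges and hence lies on $k$ zones, not two, so your invariants ``every face is entered at most twice'' and ``every face ends up on exactly two zones'' are also wrong in the generalized setting.

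The paper's own treatment of this corollary is merely the observation that the recognition procedure of Section~\ref{section:recog} carries over: one tests 3-connected planarity, tests that every face has \emph{even} length, traces zones through opposite edges while charging the work to face--zone incidences (which total $\sum_f \mathrm{length}(f)/2 = O(n)$), and then checks the face-pair and equal-chain conditions. Your charging scheme and the bucketing of face pairs by zone identifiers would survive this generalization essentially unchanged, so the fix is localized: replace ``length four'' by ``even length'' and ``exactly two zones per face'' by ``exactly $\mathrm{length}(f)/2$ zones per face,'' and re-verify that the incidence count is still linear. It is also worth noting that the paper explicitly leaves the \emph{characterization} of generalized zonohedral graphs as an open problem, so, strictly speaking, both the paper and any proof of this corollary can only claim a linear-time test of the necessary conditions rather than a full recognition in the sense of Theorem~\ref{thm411}.
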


\section{Conclusion}

An immediate open problem is to characterize graphs of other
subclasses of convex polyhedra, in particular graphs of
generalized zonohedra. A generalized zonohedron contains faces of
length greater than four. The difficulty with characterizing
graphs of generalized zonohedra is that after deletion of a zone
the cycle $c$ may contain faces. Hence during the construction we
have to prove that those faces are in parallel pairs and a great
circle exists through them.

Our construction of $P$ starts with a cube.
But it will also work if we started with a parallellopiped.

%We are definitely eager to see the characterization of graphs of generalized zonohedra.
%Characterization of graphs of other subclasses of polyhedra would also be interesting.

%{\bf Acknowledgment} We wish to thank three anonymous referees of
%the conference 'Graph Drawing 2007' (in which the paper was
%submitted earlier) for their valuable comments and suggestions for
%improving the presentation of the paper.

\bibliography{zonobib}

\end{document}